\documentclass{ro}
\usepackage{graphicx}
\usepackage{multirow}
\usepackage{amsmath,amssymb,amsfonts}
\usepackage{amsthm}
\usepackage{mathrsfs}
\usepackage[title]{appendix}
\usepackage{xcolor}
\usepackage{textcomp}
\usepackage{manyfoot}
\usepackage{booktabs}
\usepackage{algorithm}
\usepackage{algorithmicx}
\usepackage{algpseudocode}
\usepackage{listings}
\usepackage{float}
\usepackage{graphicx}
\usepackage{tabularx}
\usepackage{multirow}
\usepackage{rotating}
\usepackage{array}
\usepackage{longtable}
\usepackage{url}
\usepackage{tikz}
\usepackage{pgfplots}
\usepackage{pgfplotstable}
\usepackage{pdflscape}
\usepackage{adjustbox}
\usepgfplotslibrary{groupplots}
\pgfplotsset{compat=1.18}

\makeatletter
\def\@setdate{}
\makeatother

\newtheorem{theorem}{Theorem}

\begin{document}
\title{Proving Optimality for the Bandwidth Multicoloring Problem via SAT}
%
\author{Duc Trung Kim Nguyen}
\address{VNU University of Engineering and Technology, Hanoi, Vietnam }
\email{23021533@vnu.edu.vn}
\author{Khanh Van To}\sameaddress{1} \email{khanhtv@vnu.edu.vn}
\date{}
\begin{abstract}
The Bandwidth Multicoloring Problem (BMCP) is an NP-hard extension of the Bandwidth Coloring Problem (BCP) with important applications in telecommunications, resource allocation, and scheduling. While state-of-the-art metaheuristics can efficiently produce high-quality solutions, they cannot certify global optimality. Existing exact approaches based on Constraint Programming (CP) and Integer Programming (IP) provide such guarantees but typically require extensive computation and still lag behind metaheuristics in solution quality, leaving many benchmark instances without optimality certificates.
In this paper, we present the first SAT-based exact framework for the BMCP. Our main contribution is an efficient SAT encoding that compactly models both intra-vertex and inter-vertex color distance constraints. Combined with tight color domain reduction and an incremental SAT-solving strategy, the proposed formulation significantly prunes the search space and enables efficient exact optimization.
Experimental results on the GEOM and MS-CAP benchmark suites demonstrate substantial improvements over previous exact approaches. On the challenging GEOM benchmark, the proposed framework proves optimality for more instances within only one hour of computation than the previous CP/IP approach, which required a 48-hour time limit, while also verifying the optimality of several previously reported best-known solutions.
These results demonstrate that SAT-based reasoning provides an effective exact optimization framework for the BMCP and substantially expands the range of benchmark instances whose optimality can be certified.
\end{abstract}

\subjclass{05C15, 90C27, 68R07}

\keywords{Channel assignment, Graph coloring problem, Bandwidth coloring, Bandwidth Multicoloring, SAT solving}

%
%
\maketitle
\section{Introduction}
\label{sec:introduction}

The Bandwidth Coloring Problem (BCP) is a combinatorial optimization problem that extends the classical graph coloring problem (GCP), with numerous practical applications in areas such as scheduling, resource allocation, and network design. The problem can be formally defined as follows: given an edge-weighted undirected graph $G = (V,E)$, a weight function $d: E \to \mathbb{Z}^+$ assigns a positive integer weight to each edge $e = \{u,v\} \in E$. The objective of the BCP is to find a coloring $c: V \to \mathbb{N}$ such that $\forall e = \{u,v\} \in E: |c(u) - c(v)| \geq d(\{u,v\})$, while minimizing the span value, defined as $k = \max_{v \in V} c(v)$. 

The Bandwidth Multicoloring Problem (BMCP) generalizes the BCP by allowing each vertex to be assigned multiple colors. Formally, let $G = (V,E)$ be an undirected graph equipped with two weight functions: a vertex weight function $w: V \to \mathbb{Z}^+$ specifying the number of colors required for each vertex, and an edge weight function $d: E \to \mathbb{Z}^+$ defining the minimum color separation between adjacent vertices. The objective of the BMCP is to find a multicoloring assignment $c: V \rightarrow \mathcal{P}(\mathbb{Z}^+)$ that minimizes the span value $k = \max_{u \in V, m\in c(u)} m $ while satisfying the following conditions: first, each vertex $u \in V$ is assigned exactly $w(u)$ distinct colors; second, any two distinct colors $m, n \in c(u)$ assigned to the same vertex must satisfy $|m - n| \geq d(\{u,u\})$; and finally, for any edge $e = \{u,v\} \in E$, every color $m \in c(u)$ and $n \in c(v)$ must satisfy $|m - n| \geq d(\{u,v\})$.

Both the BCP and the BMCP arise naturally from frequency assignment problems (FAP) \cite{aardal2007} in the field of telecommunications, where the graph's vertices represent radio transmitters and the edge weights dictate the minimum frequency separations required to avoid electromagnetic interference \cite{eisenblatter2002}. Because both models are generalizations of the classical graph coloring problem, they are strongly NP-hard \cite{malaguti2010survey}. Consequently, a vast amount of literature has been dedicated to metaheuristic approaches to find high-quality approximations, including iterated tabu search \cite{lai2013multistart}, learning-based hybrid search \cite{jin2015}, path relinking \cite{lai2016}, and variable neighborhood search \cite{matic2017}. While these methods perform well on large-scale instances, optimality matters profoundly in combinatorial optimization. Establishing mathematically proven optimal solutions is not merely a theoretical exercise; it is a necessity for benchmark certification and for reliably validating heuristic results. Furthermore, providing exact lower bounds is crucial for supporting future algorithm development. The critical importance of exact methods is perhaps best highlighted by their ability to identify incorrect published solutions, as demonstrated when exact approaches successfully proved that a previously published heuristic solution for the BMCP \cite{chakraborty2001efficient} was mathematically flawed \cite{dias2016constraint}. 

Despite this pressing need for provable optimality, existing exact methods for the BMCP have essentially stagnated since 2021. The current state-of-the-art exact solvers rely heavily on Constraint Programming (CP) and Integer Programming (IP) models introduced by Dias et al. \cite{dias2016constraint, dias2021}. While these models can handle small-scale instances, they suffer from exponential computational complexity and fail to scale effectively. Conversely, the adjacent field of the BCP has recently experienced a major breakthrough. Boolean Satisfiability (SAT)-based methods have dramatically changed the state-of-the-art for the BCP \cite{dey2013, heule2022}. Most notably, Faber et al. \cite{faber2024} introduced partial-ordering-based SAT encodings (POP-S-B and POPH-S-B) that successfully identified optimal solutions for 32 out of 33 instances in the challenging GEOM dataset \cite{trick2002}. Surprisingly, despite the massive success of SAT solving in the BCP, the BMCP has never benefited from SAT methodologies. Extending these formulations to the BMCP is highly non-trivial. Naive adaptations, such as transforming the BMCP into a BCP via vertex splitting, inevitably lead to an unmanageable explosion in the number of variables and clauses, severely crippling solver efficiency. 

This paper fills that methodological gap by proposing the first dedicated SAT-based framework designed natively for the complex combinatorial structure of the BMCP. The primary contributions of this work are summarized as follows. First, we introduce the very first SAT-based framework for the BMCP, completely bypassing the inefficiencies of vertex decomposition. As our primary contribution, we propose a novel Order-Based Encoding (OBE) that natively captures the problem's complex combinatorial structure. Second, our proposed methodology significantly improves upon the state-of-the-art in exact solving, decisively outperforming existing CP and IP formulations. Third, by leveraging this encoding, we successfully prove optimality for several previously open instances within the benchmark datasets. Finally, and most importantly, our SAT formulation achieves these unprecedented exact solutions under a much smaller computational budget than what was previously required by the existing exact models, underscoring the exceptional efficiency and scalability of our approach.

The remainder of this paper is organized as follows. Section \ref{sec:related-work} reviews related work on metaheuristic and exact approaches for the BMCP. Section \ref{sec:sat-formula} presents the proposed SAT-based approach, describing the encoding method. Section \ref{sec:experiments} reports computational experiments on the GEOM and MS-CAP benchmark instances. Finally, Section \ref{sec:conclusion} summarizes the findings and discusses directions for future research.

\section{Related work}
\label{sec:related-work}
The Bandwidth Coloring Problem and its generalization, the Bandwidth Multicoloring Problem, model resource allocation scenarios where interacting elements must be separated by predefined distances. While the BCP assigns a single color to each vertex such that adjacent vertices respect specific edge-weight separations, the BMCP introduces the additional complexity of demanding multiple colors per vertex, governed by internal vertex separation constraints. Because these problems are NP-hard, the pursuit of optimal or near-optimal solutions has motivated a rich body of literature. Existing methodologies can be broadly categorized into heuristic or metaheuristic techniques, which prioritize computational speed for large instances, and exact methods, which are mathematically designed to guarantee solution optimality. Table \ref{tab:previous-work} provides an overview of existing studies, including their key contributions, the problems addressed, and the availability of optimality guarantees for the BCP and the BMCP. 

\begin{table}[htpb]
\centering
\caption{Previous work on BCP and BMCP}
\label{tab:previous-work}
\begin{adjustbox}{width=\linewidth}
\begin{tabular}{|c|p{5cm}|l|l|} 
\hline
\textbf{Work} & \multicolumn{1}{c|}{\textbf{Main contribution}} & \multicolumn{1}{c|}{\textbf{Target Problem}} & \multicolumn{1}{c|}{\begin{tabular}[c]{@{}c@{}}\textbf{Proof of}\\\textbf{Optimality}\end{tabular}}  \\ 
\hline
\begin{tabular}[c]{@{}c@{}}Faber et al.,\\2024 \cite{faber2024}\end{tabular} & Partial-ordering SAT models (POP-S-B and POPH-S-B) & BCP & Yes \\
\hline
\begin{tabular}[c]{@{}c@{}}Lai et al.,~\\2013 \cite{lai2013multistart}\end{tabular} & Multistart Iterated Tabu Search (MITS) algorithm & BCP, BMCP & No \\ 
\hline
\begin{tabular}[c]{@{}c@{}}Jin et al.,\\2014 \cite{jin2014effective}\end{tabular} & Learning-based Hybrid Search (LHS) algorithm & BCP, BMCP & No \\ 
\hline
\begin{tabular}[c]{@{}c@{}}Lai et al.,~\\2016 \cite{lai2016}\end{tabular} & Path relinking strategies & BCP, BMCP & No \\ 
\hline
\begin{tabular}[c]{@{}c@{}}Mati\'c et al.,\\2017 \cite{matic2017}\end{tabular} & Variable Neighborhood Search (VNS) & BCP, BMCP & No \\ 
\hline
\begin{tabular}[c]{@{}c@{}}Dias et al.,\\2016, 2021 \cite{dias2016constraint,dias2021}\end{tabular} & Constraint Programming (CP) and Integer Programming (IP) models & BCP, BMCP & Yes \\ 
\hline
\end{tabular}
\end{adjustbox}
\end{table}

Metaheuristic algorithms have been extensively applied to both the BCP and the BMCP, primarily due to their ability to efficiently generate high-quality solutions. To tackle the BMCP using these methods, the problem is typically reduced to a standard BCP by decomposing each vertex into a clique, where the clique size corresponds to the vertex's color demand, and internal edge weights equal the vertex's minimum separation distance. Several prominent frameworks have been successfully adapted for these problems. These include the Multistart Iterated Tabu Search (MITS) \cite{lai2013multistart}, which leverages fast incremental evaluation to streamline objective function updates; Learning-based Hybrid Search (LHS) \cite{jin2014effective}, which combines forward-checking construction with adaptive tabu search repair; path relinking strategies \cite{lai2016}, which maintain a balance between search intensification and diversification; and Variable Neighborhood Search (VNS) \cite{matic2017}, which structures local search around conflict metrics and edge weights.

Despite their computational speed and scalability, these metaheuristic approaches possess inherent limitations, most notably their fundamental inability to guarantee global optimality. As previously established, proving optimality is not merely a theoretical exercise but a critical necessity for advancing the field. Exact mathematical guarantees are indispensable for rigorously certifying benchmarks, reliably validating the outputs of heuristic algorithms, and providing exact lower bounds that support future algorithmic development. Moreover, such rigor is uniquely capable of identifying and correcting erroneously published solutions. In real-world applications such as telecommunications frequency assignment, the absence of these exact guarantees can have severe practical consequences; relying on heuristic algorithms, which are prone to stagnating in local optima and heavily dependent on extensive parameter tuning, often yields suboptimal allocations, ultimately leading to costly spectrum waste and signal interference.

To address the need for mathematical guarantees, exact formulations for the BMCP have been explored through the lens of Constraint Programming and Integer Programming. Dias et al. \cite{dias2016constraint,dias2021} proposed both CP and IP models designed to find provably optimal solutions. In their CP model, they utilize integer variables $x(i,k)$ to represent the $k$-th color assigned to vertex $i$ ($1 \leq k \leq w(i)$). This is supported by a global constraint, \textit{allMinDistance}, which ensures that the minimum distance between any two colors assigned to the same vertex is at least the vertex's minimum separation distance. The CP formulation is given by:
\begin{align}
\text{Minimize} \quad
    & \max_{\substack{i \in V \\ 1 \le k \le w(i)}} x(i,k)
\\
\text{Subject to} \quad
    & |x(i,k)-x(j,m)| \ge d(\{i,j\})
    && \forall \{i,j\}\in E,
\\
    & && 1 \le k \le w(i) \notag
\\
    & && 1 \le m \le w(j) \notag
\\
    & \text{allMinDistance}\!\left(
        \{x(i,k):1\le k\le w(i)\},
        d(\{i,i\})
      \right)
    && \forall i\in V
\\
    & x(i,k)\in\mathbb{Z}^{+}
    && \forall i\in V,
\\
    & && 1 \le k \le w(i) \notag
\end{align}

Alternatively, their IP model introduces two sets of variables: $x_{ic}$, a binary variable indicating whether color $c$ is assigned to vertex $i$, and $z_{\max}$, an integer variable representing the maximum color used. The IP formulation is expressed as:
\begin{align}
\text{Minimize} \quad
    & z_{\max}
\\[0.5ex]
\text{Subject to} \quad
    & \sum_{c=1}^{UB} x_{ic} = w(i)
    &&
    \forall i \in V
\\
    & x_{ic} + x_{je} \le 1
    &&
    \begin{aligned}[t]
        &\forall \{i,j\} \in E, i \neq j,\\
        &1 \le c,e \le UB,\\
        &|c-e| < d(\{i,j\})
    \end{aligned}
\\
    & x_{ic} + x_{ie} \le 1
    &&
    \begin{aligned}[t]
        &\forall i \in V,\\
        &1 \le c,e \le UB,\\
        &|c-e| < d(\{i,i\})
    \end{aligned}
\\
    & z_{\max} \ge c\,x_{ic}
    &&
    \begin{aligned}[t]
        &\forall i \in V,\\
        &1 \le c \le UB
    \end{aligned}
\\
    & x_{ic} \in \{0,1\}
    &&
    \begin{aligned}[t]
        &\forall i \in V,\\
        &1 \le c \le UB
    \end{aligned}
\\
    & z_{\max} \in \mathbb{R}
\end{align}

Their comparative studies of these formulations show that the IP model generally outperforms the CP model in terms of computational efficiency, particularly for larger instances. While the IP model is more computationally efficient at proving optimality, they note that the CP model still holds a specific advantage in certain scenarios. If the solvers hit the designated time limit without finding the optimal solution, the CP model tends to return a higher-quality feasible solution than the IP model. However, both models face significant challenges when scaling to larger problem sizes due to the exponential growth of the search space. 

By translating complex combinatorial constraints into a unified Boolean formula, SAT methods leverage the highly optimized inference engines of modern SAT solvers. Because these solvers utilize advanced techniques such as Conflict-Driven Clause Learning (CDCL) \cite{marques2009conflict}, which allows the algorithm to learn from dead ends and drastically prune the search tree, they excel at navigating tightly constrained problem spaces. For the BCP, the most significant recent advancement was made by Faber et al. (2024) \cite{faber2024}, who proposed highly effective partial-ordering-based SAT encodings (POP-S-B and POPH-S-B). By generalizing earlier ordering approaches, their encodings optimally solved several previously open GEOM benchmark instances for the first time. 

Consequently, the current state of the literature reveals a significant research gap in solving the BMCP. First, while metaheuristic algorithms produce excellent feasible solutions, they inherently lack the mathematical mechanisms required to prove optimality. Second, exact approaches such as the CP and IP models developed by Dias et al. \cite{dias2016constraint,dias2021} successfully provide these strict optimality guarantees, yet their scalability remains highly restricted, severely limiting their application to larger problem instances. Finally, despite the fact that SAT-based methods demonstrated tremendous success and computational efficiency for the related BCP in 2024, there is currently no existing SAT encoding formulated for the BMCP. Bridging this gap requires more than trivial adaptations; relying on naive reductions, such as transforming a BMCP instance into a standard BCP by splitting vertices into dense cliques, inherently triggers a catastrophic proliferation of variables and clauses that neutralizes the advanced CDCL mechanisms. Therefore, there is an imperative need to formulate compact, native SAT encodings that directly capture both intra-vertex and inter-vertex distance constraints to effectively harness the deductive power of SAT solvers for the BMCP.

\section{SAT Encodings for the Bandwidth Multicoloring Problem} \label{sec:sat-formula}
In this section, we develop three SAT formulations for the BMCP. To establish our framework, we first introduce a Flat Encoding (FE) as a direct baseline, followed by an intermediate Slot-Based Encoding (SBE) that utilizes local bounds. Finally, we present the core contribution of this paper: the highly optimized Order-Based Encoding (OBE), which elegantly overcomes the spatial and computational limitations of the prior models.

\subsection{Global and Local Bound Computation} \label{subsec:bounds}

To reduce the search space and accelerate SAT-based solving, we establish tight lower and upper bounds for the span, as well as restricted color domains for individual vertices.

The minimum span is constrained by the densest substructures within the graph. For any clique $C \subseteq V$, the required span is at least $1 + (W_C - 1) \cdot d_{min}^C$, where $W_C = \sum_{v \in C} w(v)$ is the total color requirement and $d_{min}^C = \min_{u, v \in C} d(\{u,v\})$ is the tightest distance constraint in $C$. We define the lower bound $LB$ as the maximum bound discovered across maximal cliques, which are extracted using the Bron-Kerbosch algorithm \cite{bron1973algorithm}. A formal proof is provided in Appendix \ref{apx:lower-bound-correctness}.

Conversely, an upper bound ($UB$) caps the SAT variables' domains. The trivial bound $1 + \left( \sum_{u \in V} w(u) - 1 \right) \cdot \max_{\{u,v\} \in E} d(\{u,v\})$ is overly large and yields intractable SAT formulas. Instead, we compute a tighter $UB$ via a greedy heuristic \cite{dias2021} (Algorithm \ref{alg:bmcp_upper_bound}). By prioritizing vertices with the largest weight $w(u)$, the algorithm sequentially assigns colors, incrementally increasing candidate values until all distance constraints are satisfied. The maximum color utilized by this heuristic becomes our global $UB$.

Given a target span $k \in [LB, UB]$, we restrict the allowed color domains for each vertex to preemptively prune invalid assignments. To explicitly break symmetries, we enforce a strict ascending order for the colors assigned to any vertex $u$: $c(u)_1 < c(u)_2 < \dots < c(u)_{w(u)}$. This ordering dictates tight bounds on the minimum ($LB_{u,i}$) and maximum ($UB_{u,i}$) possible values for the $i$-th color of vertex $u$:
\begin{alignat*}{2}
LB_{u,i}
    &= 1 + (i-1)\cdot d(\{u,u\})
    &\qquad&
    \forall u \in V,\; i \in [1,w(u)]
\\
UB_{u,i}
    &= k - (w(u)-i) \cdot d(\{u,u\})
    &\qquad&
    \forall u \in V,\; i \in [1,w(u)]
\end{alignat*}

If $UB_{u,i} < LB_{u,i}$ for any vertex $u$ and index $i$, the targeted span $k$ is mathematically too small to accommodate $u$'s intra-vertex constraints. Consequently, we can immediately deduce that $k$ is invalid without generating or evaluating the SAT formula.

\begin{algorithm}[htb]
\caption{Greedy Upper Bound Computation for BMCP}
\label{alg:bmcp_upper_bound}
\begin{algorithmic}[1]
\Require Graph $G = (V, E)$ with vertex weights $w$ and edge weights $d$
\Ensure Upper bound $UB$ on the optimal span
\State $V' \gets V$
\State $numCol(i) \gets 0$ for all $i \in V$
\State $c(i) \gets \emptyset$ for all $i \in V$
\While{$V' \neq \emptyset$}
    \State $u \gets \arg\max_{v \in V'} w(v)$ \Comment{Select unassigned vertex with max weight}
    \While{$numCol(u) < w(u)$}
        \State $candColor \gets numCol(u) \times d(\{u,u\}) + 1$
        \State $violated \gets \text{true}$
        \While{$violated$}
            \State $violated \gets \text{false}$
            \For{$v \in (V \setminus V') \cup \{u\}$}
                \For{$k \in c(v)$}
                    \If{$|k - candColor| < d(\{u,v\})$}
                        \State $violated \gets \text{true}$
                        \State $candColor \gets candColor + 1$
                        \State \textbf{break}
                    \EndIf
                \EndFor
                \If{$violated$}
                    \State \textbf{break}
                \EndIf
            \EndFor
        \EndWhile
        \State $c(u) \gets c(u) \cup \{candColor\}$
        \State $numCol(u) \gets numCol(u) + 1$
    \EndWhile
    \State $V' \gets V' \setminus \{u\}$
\EndWhile
\State $UB \gets \max_{v \in V, k \in c(v)} k$
\State \Return $UB$
\end{algorithmic}
\end{algorithm}

\subsection{Flat Encoding (FE)} \label{subsec:flat-encoding}
Serving as our initial baseline, this flat SAT encoding is a direct transformation of the IP model proposed by Dias et al. \cite{dias2016constraint,dias2021}. The encoding uses Boolean variables for each vertex $u \in V$ and each available color $m \in \{1, 2, \dots, k\}$ with the semantics that $x_{u,m} = \text{true} \iff m\in c(u)$.

The constraints are defined as follows:
\begin{itemize}
    \item \textit{Color cardinality constraint:}
        \begin{equation}
            \sum_{m=1}^k x_{u, m} = w(u) \quad \forall u \in V
        \end{equation}
    
    \item \textit{Intra-vertex color distance constraint:}
        \begin{equation}
            \neg x_{u, m} \lor \neg x_{u, n} \quad 
            \begin{aligned}
                &\forall u \in V, \; \forall m,n \in \{1, \dots, k\} \\
                &\text{s.t. } m < n \text{ and } n - m < d(\{u,u\}) 
            \end{aligned}
        \end{equation}
    
    \item \textit{Inter-vertex color distance constraint:}
        \begin{equation}
            \neg x_{u,m} \lor \neg x_{v,n} \quad 
            \begin{aligned}
                &\forall \{u,v\} \in E, u \neq v, \; \forall m,n \in \{1, \dots, k\} \\
                &\text{s.t. } |m - n| < d(\{u,v\})
            \end{aligned}
        \end{equation}
\end{itemize}

\subsection{Slot-Based Encoding (SBE)} \label{subsec:slot-based-encoding}
The flat encoding does not take advantage of the restricted color domains computed for each vertex, as it evaluates all colors up to the target span $k$. To address this issue and build towards our primary formulation, this encoding leverages the local bounds to exploit the strict ascending order for the colors assigned to any vertex. It employs Boolean variables $x_{u,i,m} = \text{true} \iff c(u)_i = m \quad \forall u \in V, i \in [1, w(u)], m \in [LB_{u,i}, UB_{u,i}]$.

The constraints for this encoding are:
\begin{itemize}
    \item \textit{Exactly-one value for the $i$-th color:}
        \begin{equation}
            \sum_{m=LB_{u,i}}^{UB_{u,i}} x_{u, i, m} = 1 \quad \forall u \in V, \forall i \in [1, w(u)]
        \end{equation}
    
    \item \textit{Intra-vertex color distance constraint:}
            \begin{equation}
                \neg x_{u,i,m} \lor \neg x_{u,i+1,n} \quad
                \begin{aligned}
                    &\forall u \in V, \; \forall i \in [1, w(u)-1], \\
                    &\forall m \in [LB_{u,i}, UB_{u,i}], \; \forall n \in [LB_{u,i+1}, UB_{u,i+1}] \\
                    &\text{s.t. } n - m < d(\{u,u\}) 
                \end{aligned}
            \end{equation}
        
        \item \textit{Inter-vertex color distance constraint:}
            \begin{equation}
                \neg x_{u,i,m} \lor \neg x_{v,j,n} \quad
                \begin{aligned}
                    &\forall \{u,v\} \in E, u \neq v, \; \forall i \in [1, w(u)], \; \forall j \in [1, w(v)] \\
                    &\forall m \in [LB_{u,i}, UB_{u,i}], \; \forall n \in [LB_{v,j}, UB_{v,j}] \\
                    &\text{s.t. } |m - n| < d(\{u,v\})
                \end{aligned}
            \end{equation}
\end{itemize}

The \textit{Color cardinality constraint} is inherently satisfied by instantiating variables over the explicit index $i \in [1, w(u)]$, which naturally guarantees that each vertex $u$ is assigned exactly $w(u)$ colors, making an explicit encoding of this constraint unnecessary.

\subsection{Order-Based Encoding (OBE)} \label{subsec:order-based-encoding}
Representing the primary contribution of this work, we extend the partial-ordering concepts proposed by Faber et al. \cite{faber2024} and our intermediate slot-based approach to develop the Order-Based Encoding (OBE) for the BMCP. The encoding uses the following variables:
\begin{itemize}
    \item $x_{u,i,m} = \text{true} \iff c(u)_i=m$
    \item $g_{u,i,m} = \text{true} \iff c(u)_i\geq m$
\end{itemize}
The constraints are defined as follows:
\begin{itemize}
    \item \textit{Linkage between variables $x$ and $g$:}
            \begin{alignat}{2}
                x_{u,i,m}
                &\leftrightarrow g_{u,i,m} \wedge \neg g_{u,i,m+1}
                &\qquad&
                \begin{aligned}[t]
                    &\forall u \in V,\; i \in [1,w(u)],\\
                    &m \in [LB_{u,i},UB_{u,i}-1]
                \end{aligned}
                \\
                x_{u,i,UB_{u,i}}
                &\leftrightarrow g_{u,i,UB_{u,i}}
                &\qquad&
                \forall u \in V,\; i \in [1,w(u)]
            \end{alignat}

    \item \textit{Monotonicity of variable $g$:}
            \begin{alignat}{2} \label{eq:monotonicity}
                g_{u,i,m} \rightarrow g_{u,i,m-1}
                &\qquad&
                \begin{aligned}[t]
                    &\forall u \in V,\; i \in [1,w(u)],\\
                    &m \in [LB_{u,i}+1,UB_{u,i}]
                \end{aligned}
            \end{alignat}

    \item \textit{Lower bound of the color value assigned to each vertex:}
            \begin{alignat}{2}
                g_{u,i,LB_{u,i}} \qquad \forall u \in V, i \in [1, w(u)]
            \end{alignat}

    \item \textit{Intra-vertex color distance constraint:}
            \begin{alignat}{2}
                g_{u,i,m} \rightarrow g_{u,i+1,m+d(\{u,u\})}
                &\qquad&
                \begin{aligned}[t]
                    &\forall u \in V,\; i \in [1,w(u)-1],\\
                    &m \in [LB_{u,i}, UB_{u,i}]
                \end{aligned}
            \end{alignat}

    \item \textit{Inter-vertex color distance constraint:}
            \begin{alignat}{2}
                x_{u,i,m} \rightarrow (g_{v,j,m+d(\{u,v\})} \vee \neg g_{v,j,m - d(\{u,v\})+1})
                &\qquad&
                \begin{aligned}[t]
                    &\forall \{u,v\} \in E, u \neq v,\\
                    &i \in [1, w(u)], j \in [1, w(v)],\\
                    &m \in [LB_{u,i}, UB_{u,i}]
                \end{aligned}
            \end{alignat}
\end{itemize}
 \textit{(Non-existent variables are considered as true or false constants accordingly and do not appear in the formula).}

Explicit SAT clauses for the \textit{Exactly-one value for the $i$-th color} constraint are unnecessary in this formulation. Specifically, linkage and monotonicity constraints ensure that the sequence $g_{u,i,m}$ transitions from true to false exactly once, forcing exactly one $x_{u,i,m}$ to evaluate to true for each index $i$. A formal proof for the order-based encoding is provided in Appendix \ref{apx:order-based-correctness}.

\subsection{Complexity Analysis} \label{sec:complexity}
In this section, we analyze the spatial complexity of the proposed SAT encodings by comparing the asymptotic number of variables and clauses they generate. To facilitate this comparison, let $W = \max_{u \in V} w(u)$ denote the maximum weight among all vertices in the graph. 

Table \ref{tab:complexity} provides a theoretical summary of the encoding sizes for the FE, SBE, and OBE. It is important to note that both the FE and SBE formulations utilize Sinz's sequential counter encoding \cite{sinz2005towards}.

\begin{table}[htb]
\centering
\caption{Comparison of encoding sizes}
\label{tab:complexity}
\begin{tabular}{lll} 
\hline
\multicolumn{1}{c}{\textbf{Encoding}} & \multicolumn{1}{c}{\textbf{Variables}} & \multicolumn{1}{c}{\textbf{Clauses}}                \\ 
\hline
FE                                     & $\mathcal{O}(|V| \cdot k^2)$            & $\mathcal{O}((|V|+|E|) \cdot k^2)$  \\
SBE                                    & $\mathcal{O}(|V| \cdot W \cdot k)$      & $\mathcal{O}(|E| \cdot W^2 \cdot k^2)$               \\
OBE                                    & $\mathcal{O}(|V| \cdot W \cdot k)$      & $\mathcal{O}(|E| \cdot W^2 \cdot k)$                 \\
\hline
\end{tabular}
\end{table}

\subsection{Finding the optimal solution} \label{sec:optimal-solution}

The SAT formulation described in the preceding sections models the BMCP as a decision problem: determining whether a valid coloring exists for a given fixed span. To solve the optimization variant of the BMCP—finding the minimum possible span $k^*$—we must systematically evaluate different span values. Rather than repeatedly generating and solving independent SAT instances from scratch, which discards valuable learned clauses and incurs significant encoding overhead, we employ an incremental SAT-solving strategy.

Our optimization process utilizes a top-down linear search. As established in Section \ref{subsec:bounds}, the upper bound algorithm guarantees an initial valid span $UB$. Therefore, the solver is instantiated exactly once with the complete encoding for the span value $UB$. To progressively search for a better span, we iteratively restrict the search space by forbidding the maximum currently allowed color. 

To evaluate if the graph can be colored with a span of $k-1$ (given that $k$ is currently valid), we do not need to alter the existing constraints. Instead, we dynamically inject unit clauses that strictly prohibit any vertex from being assigned the color value $k$. Depending on the chosen encoding, we append the corresponding incremental formula to the solver:

\begin{itemize}
    \item \textit{Flat encoding:}
        \begin{equation} \label{eq:incremental-flat}
            \neg x_{u,k} \qquad \forall u \in V
        \end{equation}
        
    \item \textit{Slot-based encoding:}
        \begin{equation} \label{eq:incremental-slot}
            \neg x_{u,i,k} \qquad \forall u \in V, i \in [1, w(u)] \quad \text{s.t.} \quad LB_{u,i} \leq k \leq UB_{u,i}
        \end{equation}

    \item \textit{Order-based encoding:}
        \begin{equation} \label{eq:incremental-order}
            \neg g_{u,i,k} \qquad \forall u \in V, i \in [1, w(u)] \quad \text{s.t.} \quad LB_{u,i} \leq k \leq UB_{u,i}
        \end{equation}
\end{itemize}

By adding these unit clauses, the SAT solver structurally invalidates any assignment relying on the color $k$, effectively forcing the target span to $k-1$. This process is repeated, decrementing the target span one by one, until the solver returns an \texttt{UNSAT} status. The optimal span $k^*$ is then strictly the last value for which a \texttt{SAT} status was achieved. 

The complete procedure is formalized in Algorithm \ref{alg:optimal-solving}.

\begin{algorithm}[htb]
\caption{Incremental Optimal Solving for BMCP}
\label{alg:optimal-solving}
\begin{algorithmic}[1]
\Require Graph $G=(V,E)$, initial valid upper bound $UB$ and lower bound $LB$
\Ensure Optimal span $k^*$
\State Encode complete BMCP instance for span $UB$
\State $k \gets UB$

\While{$k > LB$}
    \State \text{Add unit clauses forbidding color } $k$ \text{ (i.e., } $\neg x_{u,k}$, $\neg x_{u,i,k}$, \text{or } $\neg g_{u,i,k}$\text{)}
    \State $result \gets \text{SATSolve()}$
    \If{$result = \texttt{UNSAT}$}
        \State \Return $k$ \Comment{The last valid span before the problem became infeasible}
    \EndIf
    \State $k \gets k - 1$ \Comment{Prepare to test the next smaller span}
\EndWhile

\State \Return $k$ \Comment{If loop finishes, the optimal solution is exactly LB}
\end{algorithmic}
\end{algorithm}

\section{Experimental Evaluation}
\label{sec:experiments}

\subsection{Dataset}
\label{subsec:dataset}
We evaluate the proposed Order-Based Encoding (OBE) using 53 established benchmark instances from prior BMCP literature, divided into two distinct suites.

The first suite comprises 33 geometric instances, known as the GEOM set, introduced by Trick \cite{trick2002}. Vertices are distributed across a $10,000 \times 10,000$ grid, with edges connecting nodes within a predefined distance threshold; edge weights are inversely proportional to this distance. This set includes sparse (GEOMn) and dense (GEOMna, GEOMnb) configurations. Vertex weights are uniformly distributed in $[1, 10]$ for GEOMn and GEOMna, and $[1, 3]$ for GEOMnb.

The second suite contains the remaining 20 instances, derived from the classic Philadelphia (21 vertices) and Helsinki (25 vertices) instances, originally formulated for the Minimum Span Channel Assignment Problem (MS-CAP) \cite{chakraborty2001efficient}, plus a 55-vertex artificial extension. These instances directly model the physical topology and interference constraints of real-world cellular telecommunication networks.

\subsection{Experimental Setup}
\label{subsec:experimental-setup}
All local experiments were conducted on an Intel Core i5-12500H with 16 GB RAM running Ubuntu 24.04 LTS. A 3600-second time limit was applied to these experiments, covering both the encoding and search phases. To isolate encoding effectiveness, all SAT-based methods utilized CaDiCaL (v3.0.0) \cite{biere2023cadical_vivinst}. Additionally, all local experiments used identical initial lower and upper bounds (see Section \ref{subsec:bounds}). The baseline FE and SBE formulations, alongside our main OBE model, were implemented in C++. Specifically, the FE and SBE method utilizes the sequential counter encoding \cite{sinz2005towards} to enforce its cardinality constraints, whereas the OBE method handles these inherently. We conducted two primary sets of comparisons:

First, we compared OBE against FE, SBE, and two SAT models (POP-S-B and POPH-S-B) by Faber et al. \cite{faber2024}. Because the latter models only apply to the BCP, BMCP instances were first converted into equivalent BCP instances. We executed POP-S-B and POPH-S-B using the authors' original Python 3.14 source code\footnote{\url{https://github.com/s6dafabe/popsatgcpbcp}}.

Second, OBE was compared against exact models, heuristics, and metaheuristics. We re-implemented the CP and IP models proposed by Dias et al. \cite{dias2016constraint,dias2021}, evaluating them with CPLEX v22.2\footnote{\url{https://www.ibm.com/products/ilog-cplex-optimization-studio}} and Gurobi v13.0.2\footnote{\url{https://www.gurobi.com/}}. For the LPR heuristic \cite{lai2016} and VNS metaheuristic \cite{matic2017}, we cite the authors' published computational results directly; consequently, these were the only comparisons not bound by our shared hardware or the 3600-second time limit.

\subsection{Metrics}
\label{subsec:metrics}
To comprehensively evaluate and compare the performance of the various approaches, we defined specific evaluation metrics tailored to the two primary sets of comparisons outlined in Section \ref{subsec:experimental-setup}. Two main metrics were consistently applied across all comparisons: the best span value obtained ($k^*$) and the total computational time required to find or report the solution. To ensure statistical rigor, the evaluated methods were subsequently ranked using the Friedman test \cite{friedman1937use}, with the analysis conducted exclusively based on the best span value.

In addition to the core metrics, we evaluated the number of instances where the optimal span was successfully found and proven. This metric was applied to the entirety of the first set of comparisons (evaluating OBE against FE, SBE, POP-S-B, and POPH-S-B) as well as to the exact models evaluated in the second set. This metric was omitted for the LPR heuristic and VNS metaheuristic portions of the second comparison set, as approximate methods cannot definitively prove solution optimality.

\subsection{Results and Analysis}
\label{subsec:results-analysis}
In the following tables, underlined and boldface values indicate the best span ($k^*$) and proven optimal solutions, respectively. The rows \#OPTIMAL and \#BEST denote the total counts of optimal and best-known spans, while \#AVG. Rank reports the average Friedman test ranking. The \textit{Gap} column shows the difference between the OBE model's span and the overall minimum $k^*$. Finally, TO, MO, and a hyphen (-) represent Time Out (3600s), Memory Out, and failure to find a feasible solution, respectively.

\subsubsection{Comparison of OBE with Alternative SAT Encodings}

The computational results evaluating the efficacy of the proposed Order-Based Encoding (OBE) against alternative SAT encodings, specifically SBE, FE, POP-S-B, and POPH-S-B \cite{faber2024}, reveal a substantial performance paradigm shift in both solving capacity and solution quality. Tables \ref{tab:compare_sat_geom} and \ref{tab:compare_sat_mscap} detail this comparison across the GEOM and MS-CAP benchmark datasets, respectively.

\begin{table}[htpb]
\centering
\caption{Performance comparison of the OBE model with other SAT approaches on GEOM instances.}
\label{tab:compare_sat_geom}
\begin{adjustbox}{width=\linewidth}
\begin{tabular}{|l|l|l|l|l|l|l|l|l|l|l|l|} 
\hline
\multicolumn{1}{|c|}{\multirow{2}{*}{Instance}} & \multicolumn{2}{c|}{OBE}                            & \multicolumn{2}{c|}{SBE}                            & \multicolumn{2}{c|}{FE}                             & \multicolumn{2}{c|}{POP-S-B \cite{faber2024}}                        & \multicolumn{2}{c|}{POPH-S-B \cite{faber2024}}                       & \multicolumn{1}{c|}{\multirow{2}{*}{Gap}}  \\ 
\cline{2-11}
\multicolumn{1}{|c|}{}                          & \multicolumn{1}{c|}{$k^*$} & \multicolumn{1}{c|}{t(s)} & \multicolumn{1}{c|}{$k^*$} & \multicolumn{1}{c|}{t(s)} & \multicolumn{1}{c|}{$k^*$} & \multicolumn{1}{c|}{t(s)} & \multicolumn{1}{c|}{$k^*$} & \multicolumn{1}{c|}{t(s)} & \multicolumn{1}{c|}{$k^*$} & \multicolumn{1}{c|}{t(s)} & \multicolumn{1}{c|}{}                      \\ 
\hline
GEOM20                                          & \underline{\textbf{149}}    & 4.10                      & \underline{149}             & TO                        & \textbf{\underline{149}}    & 731.76                    & \underline{149}             & TO                        & \underline{149}             & TO                        & 0                                          \\ 
\hline
GEOM20a                                         & \underline{\textbf{169}}    & 2.13                      & \underline{169}             & TO                        & \textbf{\underline{169}}    & 25.26                     & \underline{169}             & TO                        & \underline{169}             & TO                        & 0                                          \\ 
\hline
GEOM20b                                         & \underline{\textbf{44}}     & 0.03                      & \textbf{\underline{44}}     & 0.47                      & \textbf{\underline{44}}     & 0.07                      & \textbf{\underline{44}}     & 0.24                      & \textbf{\underline{44}}     & 0.37                      & 0                                          \\ 
\hline
GEOM30                                          & \underline{\textbf{160}}    & 213.36                    & 164                     & TO                        & 161             & TO                        & \underline{160}             & TO                        & \underline{160}             & TO                        & 0                                          \\ 
\hline
GEOM30a                                         & \underline{\textbf{209}}    & 1,240.10                  & 218                     & TO                        & 210             & TO                        & 214                     & TO                        & 217                     & TO                        & 0                                          \\ 
\hline
GEOM30b                                         & \underline{\textbf{77}}     & 0.40                      & \textbf{\underline{77}}     & 146.48                    & \textbf{\underline{77}}     & 27.09                    & \textbf{\underline{77}}     & 3.03                      & \textbf{\underline{77}}     & 4.37                      & 0                                          \\ 
\hline
GEOM40                                          & \underline{\textbf{167}}    & 5.66                      & 171                     & TO                        & \textbf{\underline{167}}    & 41.87                    & \underline{167}             & TO                        & \underline{167}             & TO                        & 0                                          \\ 
\hline
GEOM40a                                         & \underline{\textbf{213}}    & 13.15                     & 267                     & TO                        & \textbf{\underline{213}}    & 339.3                    & \underline{213}             & TO                        & \underline{213}             & TO                        & 0                                          \\ 
\hline
GEOM40b                                         & \underline{\textbf{74}}     & 1.50                      & \textbf{\underline{74}}     & 351.17                    & \textbf{\underline{74}}     & 1400.41                     & \textbf{\underline{74}}     & 296.77                    & \textbf{\underline{74}}     & 385.48                    & 0                                          \\ 
\hline
GEOM50                                          & \underline{\textbf{224}}    & 1,193.77                  & 272                     & TO                        & 225             & TO                        & \underline{224}             & TO                        & \underline{224}             & TO                        & 0                                          \\ 
\hline
GEOM50a                                         & \underline{311}             & TO                        & 404                     & TO                        & 318                     & TO                        & 326                     & TO                        & 326                     & TO                        & 0                                          \\ 
\hline
GEOM50b                                         & \underline{\textbf{83}}     & 22.37                     & 86                      & TO                        & 84              & TO                        & \underline{83}              & TO                        & \underline{83}              & TO                        & 0                                          \\ 
\hline
GEOM60                                          & \underline{\textbf{258}}    & 1,013.92                  & 305                     & TO                        & 259             & TO                        & \underline{258}             & TO                        & \underline{258}             & TO                        & 0                                          \\ 
\hline
GEOM60a                                         & \underline{357}             & TO                        & 469                     & TO                        & 363                     & TO                        & 368                     & TO                        & 370                     & TO                        & 0                                          \\ 
\hline
GEOM60b                                         & \underline{\textbf{113}}    & 858.64                    & 114                     & TO                        & 114             & TO                        & 114                     & TO                        & \underline{113}             & TO                        & 0                                          \\ 
\hline
GEOM70                                          & \underline{266}             & TO                        & 301                     & TO                        & 271                     & TO                        & 276                     & TO                        & 278                     & TO                        & 0                                          \\ 
\hline
GEOM70a                                         & \underline{472}             & TO                        & 578                     & TO                        & 477                     & TO                        & 483                     & TO                        & 486                     & TO                        & 0                                          \\ 
\hline
GEOM70b                                         & \underline{\textbf{115}}    & 3,025.24                  & 124                     & TO                        & 119                     & TO                        & 116                     & TO                        & 116                     & TO                        & 0                                          \\ 
\hline
GEOM80                                          & \underline{383}             & TO                        & 511                     & TO                        & 388                     & TO                        & 400                     & TO                        & 404                     & TO                        & 0                                          \\ 
\hline
GEOM80a                                         & \underline{364}             & TO                        & 479                     & TO                        & 373                     & TO                        & 386                     & TO                        & 388                     & TO                        & 0                                          \\ 
\hline
GEOM80b                                         & \underline{138}             & TO                        & 145                     & TO                        & 141                     & TO                        & 139                     & TO                        & \underline{138}             & TO                        & 0                                          \\ 
\hline
GEOM90                                          & \underline{328}             & TO                        & 422                     & TO                        & 334                     & TO                        & 337                     & TO                        & 337                     & TO                        & 0                                          \\ 
\hline
GEOM90a                                         & \underline{375}             & TO                        & 468                     & TO                        & 379                     & TO                        & 385                     & TO                        & 387                     & TO                        & 0                                          \\ 
\hline
GEOM90b                                         & \underline{142}             & TO                        & 165                     & TO                        & 150                     & TO                        & 145                     & TO                        & 147                     & TO                        & 0                                          \\ 
\hline
GEOM100                                         & \underline{411}             & TO                        & 494                     & TO                        & 420                     & TO                        & 421                     & TO                        & 424                     & TO                        & 0                                          \\ 
\hline
GEOM100a                                        & \underline{447}             & TO                        & -                       & MO                        & 464                     & TO                        & 472                     & TO                        & 487                     & TO                        & 0                                          \\ 
\hline
GEOM100b                                        & \underline{152}             & TO                        & 223                     & TO                        & 156                     & TO                        & 156                     & TO                        & 156                     & TO                        & 0                                          \\ 
\hline
GEOM110                                         & \underline{379}             & TO                        & -                       & MO                        & 386                     & TO                        & 396                     & TO                        & 398                     & TO                        & 0                                          \\ 
\hline
GEOM110a                                        & \underline{494}             & TO                        & -                       & MO                        & 507                     & TO                        & 524                     & TO                        & 536                     & TO                        & 0                                          \\ 
\hline
GEOM110b                                        & \underline{202}             & TO                        & 261                     & TO                        & 205                     & TO                        & 204                     & TO                        & 203                     & TO                        & 0                                          \\ 
\hline
GEOM120                                         & \underline{399}             & TO                        & -                       & MO                        & 414                     & TO                        & 418                     & TO                        & 425                     & TO                        & 0                                          \\ 
\hline
GEOM120a                                        & \underline{554}             & TO                        & -                       & MO                        & 569                     & TO                        & 603                     & TO                        & 620                     & TO                        & 0                                          \\ 
\hline
GEOM120b                                        & \underline{189}             & TO                        & 260                     & TO                        & 194                     & TO                        & 191                     & TO                        & 191                     & TO                        & 0                                          \\ 
\hline
\multicolumn{1}{|c|}{\#OPTIMAL}                 & \multicolumn{2}{c|}{14}                             & \multicolumn{2}{c|}{3}                              & \multicolumn{2}{c|}{7}                              & \multicolumn{2}{c|}{3}                              & \multicolumn{2}{c|}{3}                              & \multicolumn{1}{c|}{-}                     \\ 
\hline
\multicolumn{1}{|c|}{\#BEST}                    & \multicolumn{2}{c|}{33}                             & \multicolumn{2}{c|}{5}                              & \multicolumn{2}{c|}{7}                             & \multicolumn{2}{c|}{11}                             & \multicolumn{2}{c|}{13}                             & \multicolumn{1}{c|}{-}                     \\ 
\hline
\multicolumn{1}{|c|}{\#AVG. Rank}               & \multicolumn{2}{c|}{1.55}                           & \multicolumn{2}{c|}{4.67}                           & \multicolumn{2}{c|}{2.82}                           & \multicolumn{2}{c|}{2.85}                           & \multicolumn{2}{c|}{3.12}                           & \multicolumn{1}{c|}{-}                     \\
\hline
\end{tabular}
\end{adjustbox}
\end{table}

\begin{table}[htpb]
\centering
\caption{Performance comparison of the OBE model with other SAT approaches on MS-CAP instances.}
\label{tab:compare_sat_mscap}
\begin{adjustbox}{width=\linewidth}
\begin{tabular}{|l|l|l|l|l|l|l|l|l|l|l|l|l|} 
\hline
\multicolumn{1}{|c|}{\multirow{2}{*}{\begin{tabular}[c]{@{}c@{}}Const. \\Matr.\end{tabular}}} & \multicolumn{1}{c|}{\multirow{2}{*}{\begin{tabular}[c]{@{}c@{}}Demd. \\Vect.\end{tabular}}} & \multicolumn{2}{c|}{OBE}                       & \multicolumn{2}{c|}{SBE}                            & \multicolumn{2}{c|}{FE}                             & \multicolumn{2}{c|}{POP-S-B \cite{faber2024}}                        & \multicolumn{2}{c|}{POPH-S-B \cite{faber2024}}                       & \multicolumn{1}{c|}{\multirow{2}{*}{Gap}}  \\ 
\cline{3-12}
\multicolumn{1}{|c|}{}                                                                        & \multicolumn{1}{c|}{}                                                                       & \multicolumn{1}{c|}{$k^*$} & \multicolumn{1}{c|}{t(s)} & \multicolumn{1}{c|}{$k^*$} & \multicolumn{1}{c|}{t(s)} & \multicolumn{1}{c|}{$k^*$} & \multicolumn{1}{c|}{t(s)} & \multicolumn{1}{c|}{$k^*$} & \multicolumn{1}{c|}{t(s)} & \multicolumn{1}{c|}{$k^*$} & \multicolumn{1}{c|}{t(s)} & \multicolumn{1}{c|}{}                      \\ 
\hline
$C^1_{21}$                                                                                    & $D^1_{21}$                                                                                  & \textbf{\underline{533}}    & 63.87                     & \textbf{\underline{533}}    & 66.94                     & \textbf{\underline{533}}    & 10.54                     & \textbf{\underline{533}}    & 44.80                     & -                       & TO                        & 0                                          \\ 
\hline
$C^1_{21}$                                                                                    & $D^2_{21}$                                                                                  & \textbf{\underline{309}}    & 21.82                     & \textbf{\underline{309}}    & 14.29                     & \textbf{\underline{309}}    & 5.85                      & \textbf{\underline{309}}    & 4.74                      & \underline{\textbf{309}}    & 3.74                      & 0                                          \\ 
\hline
$C^2_{21}$                                                                                    & $D^1_{21}$                                                                                  & \textbf{\underline{533}}    & 54.40                     & \textbf{\underline{533}}    & 164.33                    & \textbf{\underline{533}}    & 30.12                     & 535                     & TO                        & -                       & TO                        & 0                                          \\ 
\hline
$C^2_{21}$                                                                                    & $D^2_{21}$                                                                                  & \textbf{\underline{309}}    & 36.66                     & \textbf{\underline{309}}    & 145.31                    & \textbf{\underline{309}}    & 124.28                    & 312                     & TO                        & 317                     & TO                        & 0                                          \\ 
\hline
$C^3_{21}$                                                                                    & $D^1_{21}$                                                                                  & \textbf{\underline{457}}    & 55.33                     & \textbf{\underline{457}}    & 48.86                     & \textbf{\underline{457}}    & 6.65                     & \textbf{\underline{457}}    & 1,346.18                  & -                       & TO                        & 0                                          \\ 
\hline
$C^3_{21}$                                                                                    & $D^2_{21}$                                                                                  & \textbf{\underline{265}}    & 15.86                     & \textbf{\underline{265}}    & 10.42                     & \textbf{\underline{265}}    & 6.54                     & \textbf{\underline{265}}    & 45.66                     & \underline{\textbf{265}}    & 1,224.81                  & 0                                          \\ 
\hline
$C^4_{21}$                                                                                    & $D^1_{21}$                                                                                  & \textbf{\underline{457}}    & 74.31                     & \textbf{\underline{457}}    & 906.15                    & \textbf{\underline{457}}    & 271.2                  & 485                     & TO                        & 489                     & TO                        & 0                                          \\ 
\hline
$C^4_{21}$                                                                                    & $D^2_{21}$                                                                                  & \textbf{\underline{265}}    & 39.68                     & \textbf{\underline{265}}    & 447.98                    & \textbf{\underline{265}}    & 104.58                    & 279                     & TO                        & 282                     & TO                        & 0                                          \\ 
\hline
$C^5_{21}$                                                                                    & $D^1_{21}$                                                                                  & \textbf{\underline{381}}    & 44.67                     & \textbf{\underline{381}}    & 35.62                     & \textbf{\underline{381}}    & 4.28                      & -                       & TO                        & -                       & TO                        & 0                                          \\ 
\hline
$C^5_{21}$                                                                                    & $D^2_{21}$                                                                                  & \textbf{\underline{221}}    & 19.21                     & \textbf{\underline{221}}    & 35.16                     & \textbf{\underline{221}}    & 34.78                     & \textbf{\underline{221}}    & 739.30                    & \underline{\textbf{221}}    & 2,296.05                  & 0                                          \\ 
\hline
$C^6_{21}$                                                                                    & $D^1_{21}$                                                                                  & \underline{427}             & TO                        & 461                     & TO                        & 428             & TO                        & 474                     & TO                        & 469                     & TO                        & 0                                          \\ 
\hline
$C^6_{21}$                                                                                    & $D^2_{21}$                                                                                  & \underline{253}             & TO                        & 283                     & TO                        & 254             & TO                        & 263                     & TO                        & 265                     & TO                        & 0                                          \\ 
\hline
$C^7_{21}$                                                                                    & $D^1_{21}$                                                                                  & \textbf{\underline{305}}    & 30.70                     & \textbf{\underline{305}}    & 24.46                     & \textbf{\underline{305}}    & 7.46                     & 307                     & TO                        & 307                     & TO                        & 0                                          \\ 
\hline
$C^7_{21}$                                                                                    & $D^2_{21}$                                                                                  & \textbf{\underline{180}}    & 50.79                     & \textbf{\underline{180}}    & 236.95                    & \textbf{\underline{180}}    & 63.67                    & 181                     & TO                        & 183                     & TO                        & 0                                          \\ 
\hline
$C^8_{21}$                                                                                    & $D^1_{21}$                                                                                  & 432             & TO                        & 534                     & TO                        & \underline{428}                     & TO                        & 464                     & TO                        & 461                     & TO                        & 4                                          \\ 
\hline
$C^8_{21}$                                                                                    & $D^2_{21}$                                                                                  & 256             & TO                        & 338                     & TO                        & \underline{255}             & TO                        & 264                     & TO                        & 263                     & TO                        & 1                                          \\ 
\hline
$C^1_{25}$                                                                                    & $D^3_{25}$                                                                                  & \textbf{\underline{73}}     & 1.89                      & \textbf{\underline{73}}     & 1.05                      & \textbf{\underline{73}}     & 0.12                      & \textbf{\underline{73}}     & 0.12                      & \underline{\textbf{73}}     & 0.12                      & 0                                          \\ 
\hline
$C^1_{25}$                                                                                    & $D^4_{25}$                                                                                  & \textbf{\underline{200}}    & 44.13                     & \textbf{\underline{200}}    & 131.86                    & \textbf{\underline{200}}    & 1.72                      & \textbf{\underline{200}}    & 14.93                     & \underline{\textbf{200}}    & 21.38                     & 0                                          \\ 
\hline
$C^1_{55}$                                                                                    & $D^5_{55}$                                                                                  & \textbf{\underline{309}}    & 58.49                     & \textbf{\underline{309}}    & 36.83                     & \textbf{\underline{309}}    & 31.52                     & -                       & TO                        & -                       & TO                        & 0                                          \\ 
\hline
$C^1_{55}$                                                                                    & $D^6_{55}$                                                                                  & \textbf{\underline{71}}     & 2.26                      & \textbf{\underline{71}}     & 5.50                      & \textbf{\underline{71}}     & 2.78                     & \textbf{\underline{71}}     & 54.74                     & \underline{\textbf{71}}     & 182.59                    & 0                                          \\ 
\hline
\multicolumn{2}{|c|}{\#OPTIMAL}                                                                                                                                                             & \multicolumn{2}{c|}{16}                             & \multicolumn{2}{c|}{16}                             & \multicolumn{2}{c|}{16}                             & \multicolumn{2}{c|}{8}                              & \multicolumn{2}{c|}{6}                              & \multicolumn{1}{c|}{-}                     \\ 
\hline
\multicolumn{2}{|c|}{\#BEST}                                                                                                                                                                & \multicolumn{2}{c|}{18}                             & \multicolumn{2}{c|}{16}                             & \multicolumn{2}{c|}{18}                             & \multicolumn{2}{c|}{8}                              & \multicolumn{2}{c|}{6}                              & \multicolumn{1}{c|}{-}                     \\ 
\hline
\multicolumn{2}{|c|}{\#AVG. Rank}                                                                                                                                                           & \multicolumn{2}{c|}{2.25}                           & \multicolumn{2}{c|}{2.85}                           & \multicolumn{2}{c|}{2.25}                           & \multicolumn{2}{c|}{3.63}                           & \multicolumn{2}{c|}{4.03}                           & \multicolumn{1}{c|}{-}                     \\
\hline
\end{tabular}
\end{adjustbox}
\end{table}

Regarding the capacity to prove optimality, the proposed OBE formulation demonstrates a superiority over existing models. Most strikingly, on the GEOM dataset, OBE exactly doubles the number of proven optimal solutions compared to FE, solving 14 instances to optimality versus FE's 7. Furthermore, OBE is the only SAT encoding capable of proving optimality on instances such as GEOM30, GEOM50, GEOM60, and GEOM70b, where all other models prematurely exhausted the computational time limit. While the performance gap narrows on the MS-CAP dataset, with OBE, SBE, and FE each proving optimality for 16 out of 20 configurations, OBE maintains its top-tier status while simultaneously exposing the limitations of the specialized POP-S-B and POPH-S-B models, which managed to prove only 8 and 6 optimal solutions, respectively. This disparity indicates that OBE's underlying logical structure significantly accelerates the solver's ability to prune the search space and verify lower bounds.

Beyond the verification of exact optima, the true practical significance of the OBE formulation lies in its robustness when processing difficult instances. In this regard, OBE is exceptionally reliable. Across the entirety of the 53 instances spanning both datasets, the calculated gap between the span identified by OBE and the minimum overall $k^*$ is zero for 51 instances. Consequently, even when interrupted by a timeout, OBE consistently returns a solution that is highly competitive with, and frequently superior to, the best solutions generated by alternative SAT encodings, with FE being the only model to find marginally tighter bounds on two specific MS-CAP configurations.

The interpretation of the bounds further underscores OBE's dominance. OBE successfully secured the best-known span in the vast majority of the evaluated configurations (33 out of 33 in GEOM, and 18 out of 20 in MS-CAP). In stark contrast, the search trajectories of alternative encodings often stagnated far from the optimal region. For example, on the GEOM dataset, the closest competitor in terms of solution quality (POPH-S-B) could only reach the best bound in approximately 39\% of the instances (13 out of 33), while FE reached it in only about 21\% (7 out of 33). On challenging MS-CAP instances utilizing the $C^6_{21}$ matrices, alternative models struggled to find high-quality spans, whereas OBE reported tighter bounds. Conversely, FE demonstrated a slight advantage on instances utilizing the $C^8_{21}$ matrices, where it managed to surpass OBE. This empirical evidence signifies that OBE not only elevates the ceiling of SAT-based optimization by solving more instances to proven optimality, but also fundamentally raises the floor by consistently providing highly competitive, near-optimal bounds when exact verification remains computationally intractable.

\begin{landscape}
\begin{table}
\caption{Performance comparison of the OBE model with exact models, heuristics, and metaheuristics on GEOM instances.}
\label{tab:compare_others_geom}
\centering
\begin{adjustbox}{width=\linewidth}
\begin{tabular}{|l|l|l|l|l|l|l|l|l|l|l|l|l|l|l|l|l|l|} 
\hline
\multicolumn{1}{|c|}{\multirow{3}{*}{Instance}} & \multicolumn{8}{c|}{Our implementation (TO = 3600s)}                                                                                                                                                                  & \multicolumn{4}{c|}{Dias et al. \cite{dias2016constraint,dias2021} (TO = 172,800s)}                                                          & \multicolumn{2}{c|}{Lai et al. \cite{lai2016}}                     & \multicolumn{2}{c|}{Mati\'c et al. \cite{matic2017}}                   & \multicolumn{1}{c|}{\multirow{3}{*}{Gap}}  \\ 
\cline{2-17}
\multicolumn{1}{|c|}{}                          & \multicolumn{2}{c|}{OBE}                            & \multicolumn{2}{c|}{CPLEX CP}                       & \multicolumn{2}{c|}{CPLEX ILP}                      & \multicolumn{2}{c|}{Gurobi}                         & \multicolumn{2}{c|}{CP}                             & \multicolumn{2}{c|}{IP}                             & \multicolumn{2}{c|}{LPR}                            & \multicolumn{2}{c|}{VNS}                            & \multicolumn{1}{c|}{}                      \\ 
\cline{2-17}
\multicolumn{1}{|c|}{}                          & \multicolumn{1}{c|}{$k^*$} & \multicolumn{1}{c|}{t(s)} & \multicolumn{1}{c|}{$k^*$} & \multicolumn{1}{c|}{t(s)} & \multicolumn{1}{c|}{$k^*$} & \multicolumn{1}{c|}{t(s)} & \multicolumn{1}{c|}{$k^*$} & \multicolumn{1}{c|}{t(s)} & \multicolumn{1}{c|}{$k^*$} & \multicolumn{1}{c|}{t(s)} & \multicolumn{1}{c|}{$k^*$} & \multicolumn{1}{c|}{t(s)} & \multicolumn{1}{c|}{$k^*$} & \multicolumn{1}{c|}{t(s)} & \multicolumn{1}{c|}{$k^*$} & \multicolumn{1}{c|}{t(s)} & \multicolumn{1}{c|}{}                      \\ 
\hline
GEOM20                                          & \textbf{\underline{149}}    & 4.10                      & \underline{149}             & TO                        & \textbf{\underline{149}}    & 137.4                     & \textbf{\underline{149}}    & 2.59                      & \underline{149}             & TO                        & \textbf{\underline{149}}    & 15.17                     & \underline{149}             & 0.00                      & \underline{149}             & 54.27                     & 0                                          \\ 
\hline
GEOM20a                                         & \textbf{\underline{169}}    & 2.13                      & \underline{169}             & TO                        & \textbf{\underline{169}}    & 4.72                      & \textbf{\underline{169}}    & 1.49                      & \underline{169}             & TO                        & \textbf{\underline{169}}    & 18.49                     & \underline{169}             & 0.10                      & \underline{169}             & 3,289                     & 0                                          \\ 
\hline
GEOM20b                                         & \textbf{\underline{44}}     & 0.03                      & \textbf{\underline{44}}     & 0.01                      & \textbf{\underline{44}}     & 0.70                      & \textbf{\underline{44}}     & 0.65                      & \textbf{\underline{44}}     & 476.9                     & \textbf{\underline{44}}     & 1.58                      & \underline{44}              & 0.00                      & \underline{44}              & 0.02                      & 0                                          \\ 
\hline
GEOM30                                          & \textbf{\underline{160}}    & 213.4                     & \underline{160}             & TO                        & \underline{160}             & TO                        & \textbf{\underline{160}}    & 1,656                     & \underline{160}             & TO                        & \underline{160}             & TO                        & \underline{160}             & 0.00                      & \underline{160}             & 5.70                      & 0                                          \\ 
\hline
GEOM30a                                         & \textbf{\underline{209}}    & 1,240                     & 210                     & TO                        & 215                     & TO                        & 210                     & TO                        & 215                     & TO                        & 211                     & TO                        & \underline{209}             & 1.40                      & \underline{209}             & 4,124                     & 0                                          \\ 
\hline
GEOM30b                                         & \textbf{\underline{77}}     & 0.40                      & \textbf{\underline{77}}     & 46.91                     & \textbf{\underline{77}}     & 0.85                      & \textbf{\underline{77}}     & 0.68                      & \underline{77}              & TO                        & \textbf{\underline{77}}     & 41.87                     & \underline{77}              & 0.00                      & \underline{77}              & 1.29                      & 0                                          \\ 
\hline
GEOM40                                          & \textbf{\underline{167}}    & 5.66                      & \underline{167}             & TO                        & \textbf{\underline{167}}    & 18.47                     & \textbf{\underline{167}}    & 15.03                     & 168                     & TO                        & \textbf{\underline{167}}    & 1,192.28                  & \underline{167}             & 0.00                      & \underline{167}             & 2,107                     & 0                                          \\ 
\hline
GEOM40a                                         & \textbf{\underline{213}}    & 13.15                     & \underline{213}             & TO                        & -                       & TO                        & \textbf{\underline{213}}    & 153.1                     & 225                     & TO                        & \textbf{\underline{213}}    & 111,262                   & \underline{213}             & 6.20                      & \underline{213}             & 13,192                    & 0                                          \\ 
\hline
GEOM40b                                         & \textbf{\underline{74}}     & 1.50                      & \textbf{\underline{74}}     & 182.6                     & \textbf{\underline{74}}     & 1,053                     & \textbf{\underline{74}}     & 62.69                     & \underline{74}              & TO                        & \textbf{\underline{74}}     & 17,028                    & \underline{74}              & 0.10                      & \underline{74}              & 1,822                     & 0                                          \\ 
\hline
GEOM50                                          & \textbf{\underline{224}}    & 1,194                     & \underline{224}             & TO                        & 226                     & TO                        & \textbf{\underline{224}}    & 17.19                     & 226                     & TO                        & \textbf{\underline{224}}    & 52,451                    & \underline{224}             & 0.10                      & \underline{224}             & 1,671                     & 0                                          \\ 
\hline
GEOM50a                                         & \underline{311}             & TO                        & 315                     & TO                        & -                       & TO                        & 337                     & TO                        & 332                     & TO                        & 361                     & TO                        & \underline{311}             & 3,560                     & \underline{311}             & 21,686                    & 0                                          \\ 
\hline
GEOM50b                                         & \textbf{\underline{83}}     & 22.37                     & \underline{83}              & TO                        & 111                     & TO                        & 90                      & TO                        & 85                      & TO                        & 87                      & TO                        & \underline{83}              & 16.80                     & \underline{83}              & 14,261                    & 0                                          \\ 
\hline
GEOM60                                          & \textbf{\underline{258}}    & 1,014                     & \underline{258}             & TO                        & -                       & TO                        & \textbf{\underline{258}}    & 49.48                     & 259                     & TO                        & \textbf{\underline{258}}    & 156,988                   & \underline{258}             & 0.10                      & \underline{258}             & 5,780                     & 0                                          \\ 
\hline
GEOM60a                                         & 357                     & TO                        & 358                     & TO                        & 460                     & TO                        & 368                     & TO                        & 380                     & TO                        & 448                     & TO                        & \underline{353}             & 3,819                     & \underline{353}             & 23,989                    & 4                                          \\ 
\hline
GEOM60b                                         & \textbf{\underline{113}}    & 858.6                     & \underline{113}             & TO                        & -                       & TO                        & 122                     & TO                        & 117                     & TO                        & 125                     & TO                        & \underline{113}             & 710.30                    & 114                     & 16,381                    & 0                                          \\ 
\hline
GEOM70                                          & \underline{266}             & TO                        & 270                     & TO                        & 374                     & TO                        & 270                     & TO                        & 284                     & TO                        & 305                     & TO                        & \underline{266}             & 1,569                     & 267                     & 12,384                    & 0                                          \\ 
\hline
GEOM70a                                         & 472                     & TO                        & 467                     & TO                        & 577                     & TO                        & 472                     & TO                        & 483                     & TO                        & 578                     & TO                        & 465                     & 9,790                     & \underline{463}             & 20,686                    & 9                                          \\ 
\hline
GEOM70b                                         & \textbf{\underline{115}}    & 3,025                     & 117                     & TO                        & -                       & TO                        & 130                     & TO                        & 123                     & TO                        & 134                     & TO                        & \underline{115}             & 764.6                     & 116                     & 16,783                    & 0                                          \\ 
\hline
GEOM80                                          & 383                     & TO                        & 382                     & TO                        & -                       & TO                        & \textbf{\underline{379}}    & 511.7                     & 395                     & TO                        & 511                     & TO                        & \underline{379}             & 1,983                     & \underline{379}             & 16,539                    & 4                                          \\ 
\hline
GEOM80a                                         & 364                     & TO                        & 366                     & TO                        & 461                     & TO                        & 397                     & TO                        & 382                     & TO                        & 479                     & TO                        & \underline{352}             & 7,880                     & 355                     & 29,209                    & 12                                         \\ 
\hline
GEOM80b                                         & \underline{138}             & TO                        & 139                     & TO                        & 179                     & TO                        & 152                     & TO                        & 145                     & TO                        & 170                     & TO                        & \underline{138}             & 51.80                     & \underline{138}             & 13,704                    & 0                                          \\ 
\hline
GEOM90                                          & \underline{328}             & TO                        & 331                     & TO                        & 418                     & TO                        & \textbf{\underline{328}}    & 2,105                     & 342                     & TO                        & 423                     & TO                        & \underline{328}             & 2,168                     & 329                     & 18,761                    & 0                                          \\ 
\hline
GEOM90a                                         & 375                     & TO                        & 380                     & TO                        & -                       & TO                        & 407                     & TO                        & 392                     & TO                        & 452                     & TO                        & \underline{372}             & 3,912                     & 373                     & 24,087                    & 3                                          \\ 
\hline
GEOM90b                                         & 142                     & TO                        & 144                     & TO                        & 210                     & TO                        & 173                     & TO                        & 156                     & TO                        & 212                     & TO                        & \underline{140}             & 2,142                     & 142                     & 19,997                    & 2                                          \\ 
\hline
GEOM100                                         & 411                     & TO                        & 405                     & TO                        & -                       & TO                        & \underline{404}             & TO                        & 426                     & TO                        & 493                     & TO                        & \underline{404}             & 64.80                     & \underline{404}             & 14,817                    & 7                                          \\ 
\hline
GEOM100a                                        & 447                     & TO                        & 444                     & TO                        & -                       & TO                        & 532                     & TO                        & 465                     & TO                        & 596                     & TO                        & \underline{426}             & 13,060                    & 429                     & 35,663                    & 21                                         \\ 
\hline
GEOM100b                                        & 152                     & TO                        & 155                     & TO                        & -                       & TO                        & 198                     & TO                        & 169                     & TO                        & 220                     & TO                        & \underline{151}             & 3,179                     & 156                     & 24,776                    & 1                                          \\ 
\hline
GEOM110                                         & 379                     & TO                        & 378                     & TO                        & 483                     & TO                        & 450                     & TO                        & 399                     & TO                        & 500                     & TO                        & \underline{375}             & 3,510                     & \underline{375}             & 22,997                    & 4                                          \\ 
\hline
GEOM110a                                        & 494                     & TO                        & 492                     & TO                        & -                       & TO                        & -                       & TO                        & 527                     & TO                        & 610                     & TO                        & \underline{478}             & 10,979                    & 480                     & 46,955                    & 16                                         \\ 
\hline
GEOM110b                                        & 202                     & TO                        & 203                     & TO                        & -                       & TO                        & 218                     & TO                        & 207                     & TO                        & 250                     & TO                        & \underline{201}             & 1,920                     & 202                     & 22,077                    & 1                                          \\ 
\hline
GEOM120                                         & 399                     & TO                        & 399                     & TO                        & -                       & TO                        & 402                     & TO                        & 427                     & TO                        & 505                     & TO                        & \underline{396}             & 999.2                     & \underline{396}             & 17,919                    & 3                                          \\ 
\hline
GEOM120a                                        & 554                     & TO                        & 547                     & TO                        & -                       & TO                        & 606                     & TO                        & 585                     & TO                        & 641                     & TO                        & \underline{531}             & 24,880                    & 539                     & 59,717                    & 23                                         \\ 
\hline
GEOM120b                                        & 189                     & TO                        & 192                     & TO                        & -                       & TO                        & 226                     & TO                        & 202                     & TO                        & 247                     & TO                        & \underline{187}             & 1,837.4                   & 190                     & 22,836                    & 2                                          \\ 
\hline
\multicolumn{1}{|c|}{\#OPTIMAL}                 & \multicolumn{2}{c|}{14}                             & \multicolumn{2}{c|}{3}                              & \multicolumn{2}{c|}{6}                              & \multicolumn{2}{c|}{12}                             & \multicolumn{2}{c|}{1}                              & \multicolumn{2}{c|}{9}                              & \multicolumn{2}{c|}{-}                              & \multicolumn{2}{c|}{-}                              & \multicolumn{1}{c|}{-}                     \\ 
\hline
\multicolumn{1}{|c|}{\#BEST}                    & \multicolumn{2}{c|}{18}                             & \multicolumn{2}{c|}{12}                             & \multicolumn{2}{c|}{7}                              & \multicolumn{2}{c|}{13}                             & \multicolumn{2}{c|}{6}                              & \multicolumn{2}{c|}{10}                             & \multicolumn{2}{c|}{32}                             & \multicolumn{2}{c|}{20}                             & \multicolumn{1}{c|}{-}                     \\ 
\hline
\multicolumn{1}{|c|}{\#AVG. Rank}               & \multicolumn{2}{c|}{3.30}                           & \multicolumn{2}{c|}{3.80}                           & \multicolumn{2}{c|}{7.05}                           & \multicolumn{2}{c|}{4.95}                           & \multicolumn{2}{c|}{5.48}                           & \multicolumn{2}{c|}{6.20}                           & \multicolumn{2}{c|}{2.32}                           & \multicolumn{2}{c|}{2.89}                           & \multicolumn{1}{c|}{-}                     \\
\hline
\end{tabular}
\end{adjustbox}
\end{table}
\end{landscape}

\begin{landscape}
\begin{table}
\caption{Performance comparison of the OBE model with exact models on MS-CAP instances.}
\label{tab:compare_others_mscap}
\centering
\begin{adjustbox}{width=\linewidth}
\begin{tabular}{|l|l|l|l|l|l|l|l|l|l|l|l|l|l|l|} 
\hline
\multicolumn{1}{|c|}{\multirow{3}{*}{\begin{tabular}[c]{@{}c@{}}Const. \\Matr.\end{tabular}}} & \multicolumn{1}{c|}{\multirow{3}{*}{\begin{tabular}[c]{@{}c@{}}Demd. \\Vect.\end{tabular}}} & \multicolumn{8}{c|}{Our implementation (TO = 3600s)}                                                                                                                                                                  & \multicolumn{4}{c|}{Dias et al. \cite{dias2016constraint, dias2021} ~(TO = 172,800s)}                                                          & \multicolumn{1}{c|}{\multirow{3}{*}{Gap}}  \\ 
\cline{3-14}
\multicolumn{1}{|c|}{}                                                                        & \multicolumn{1}{c|}{}                                                                       & \multicolumn{2}{c|}{OBE}                            & \multicolumn{2}{c|}{CPLEX CP}                       & \multicolumn{2}{c|}{CPLEX ILP}                      & \multicolumn{2}{c|}{Gurobi}                         & \multicolumn{2}{c|}{CP}                             & \multicolumn{2}{c|}{IP}                             & \multicolumn{1}{c|}{}                      \\ 
\cline{3-14}
\multicolumn{1}{|c|}{}                                                                        & \multicolumn{1}{c|}{}                                                                       & \multicolumn{1}{c|}{$k^*$} & \multicolumn{1}{c|}{t(s)} & \multicolumn{1}{c|}{$k^*$} & \multicolumn{1}{c|}{t(s)} & \multicolumn{1}{c|}{$k^*$} & \multicolumn{1}{c|}{t(s)} & \multicolumn{1}{c|}{$k^*$} & \multicolumn{1}{c|}{t(s)} & \multicolumn{1}{c|}{$k^*$} & \multicolumn{1}{c|}{t(s)} & \multicolumn{1}{c|}{$k^*$} & \multicolumn{1}{c|}{t(s)} & \multicolumn{1}{c|}{}                      \\ 
\hline
$C^1_{21}$                                                                                    & $D^1_{21}$                                                                                  & \textbf{\underline{533}}    & 63.87                     & \textbf{\underline{533}}    & 1.16                      & \textbf{\underline{533}}    & 1.07                      & \textbf{\underline{533}}    & 0.54                      & \textbf{\underline{533}}    & 4.2                       & \textbf{\underline{533}}    & 0.5                       & 0                                          \\ 
\hline
$C^1_{21}$                                                                                    & $D^2_{21}$                                                                                  & \textbf{\underline{309}}    & 21.82                     & \textbf{\underline{309}}    & 0.75                      & \textbf{\underline{309}}    & 0.81                      & \textbf{\underline{309}}    & 0.35                      & \textbf{\underline{309}}    & 1.34                      & \textbf{\underline{309}}    & 1.22                      & 0                                          \\ 
\hline
$C^2_{21}$                                                                                    & $D^1_{21}$                                                                                  & \textbf{\underline{533}}    & 54.40                     & \textbf{\underline{533}}    & 0.50                      & \textbf{\underline{533}}    & 4.55                      & \textbf{\underline{533}}    & 51.07                     & \textbf{\underline{533}}    & 10.53                     & \textbf{\underline{533}}    & 308.0                     & 0                                          \\ 
\hline
$C^2_{21}$                                                                                    & $D^2_{21}$                                                                                  & \textbf{\underline{309}}    & 36.66                     & \textbf{\underline{309}}    & 0.37                      & \textbf{\underline{309}}    & 9.02                      & \textbf{\underline{309}}    & 3.97                      & \textbf{\underline{309}}    & 625.9                     & \textbf{\underline{309}}    & 165.5                     & 0                                          \\ 
\hline
$C^3_{21}$                                                                                    & $D^1_{21}$                                                                                  & \textbf{\underline{457}}    & 55.33                     & \textbf{\underline{457}}    & 1.11                      & \textbf{\underline{457}}    & 0.89                      & \textbf{\underline{457}}    & 0.54                      & \textbf{\underline{457}}    & 3.96                      & \textbf{\underline{457}}    & 0.39                      & 0                                          \\ 
\hline
$C^3_{21}$                                                                                    & $D^2_{21}$                                                                                  & \textbf{\underline{265}}    & 15.86                     & \textbf{\underline{265}}    & 0.74                      & \textbf{\underline{265}}    & 1.28                      & \textbf{\underline{265}}    & 0.47                      & \textbf{\underline{265}}    & 3.54                      & \textbf{\underline{265}}    & 1.52                      & 0                                          \\ 
\hline
$C^4_{21}$                                                                                    & $D^1_{21}$                                                                                  & \textbf{\underline{457}}    & 74.31                     & \textbf{\underline{457}}    & 0.52                      & \textbf{\underline{457}}    & 59.83                     & \textbf{\underline{457}}    & 5.68                      & \textbf{\underline{457}}    & 41.24                     & \textbf{\underline{457}}    & 202.0                     & 0                                          \\ 
\hline
$C^4_{21}$                                                                                    & $D^2_{21}$                                                                                  & \textbf{\underline{265}}    & 39.68                     & \textbf{\underline{265}}    & 2.66                      & \textbf{\underline{265}}    & 68.34                     & \textbf{\underline{265}}    & 7.85                      & 266                     & TO                        & \textbf{\underline{265}}    & 214.0                     & 0                                          \\ 
\hline
$C^5_{21}$                                                                                    & $D^1_{21}$                                                                                  & \textbf{\underline{381}}    & 44.67                     & \textbf{\underline{381}}    & 1.12                      & \textbf{\underline{381}}    & 0.72                      & \textbf{\underline{381}}    & 0.39                      & \textbf{\underline{381}}    & 3.23                      & \textbf{\underline{381}}    & 0.29                      & 0                                          \\ 
\hline
$C^5_{21}$                                                                                    & $D^2_{21}$                                                                                  & \textbf{\underline{221}}    & 19.22                     & \textbf{\underline{221}}    & 0.73                      & \textbf{\underline{221}}    & 3.57                      & \textbf{\underline{221}}    & 0.75                      & \textbf{\underline{221}}    & 100.8                     & \textbf{\underline{221}}    & 5.09                      & 0                                          \\ 
\hline
$C^6_{21}$                                                                                    & $D^1_{21}$                                                                                  & \underline{427}             & TO                        & \underline{427}             & TO                        & \textbf{\underline{427}}    & 408.3                     & \textbf{\underline{427}}    & 63.16                     & 449                     & TO                        & \textbf{\underline{427}}    & 6,827                     & 0                                          \\ 
\hline
$C^6_{21}$                                                                                    & $D^2_{21}$                                                                                  & \underline{253}             & TO                        & \underline{253}             & TO                        & \textbf{\underline{253}}    & 1,609                     & \textbf{\underline{253}}    & 130.52                    & 266                     & TO                        & \textbf{\underline{253}}    & 2,027                     & 0                                          \\ 
\hline
$C^7_{21}$                                                                                    & $D^1_{21}$                                                                                  & \textbf{\underline{305}}    & 30.70                     & \textbf{\underline{305}}    & 1.08                      & \textbf{\underline{305}}    & 0.90                      & \textbf{\underline{305}}    & 0.34                      & \textbf{\underline{305}}    & 12.85                     & \textbf{\underline{305}}    & 1.1                       & 0                                          \\ 
\hline
$C^7_{21}$                                                                                    & $D^2_{21}$                                                                                  & \textbf{\underline{180}}    & 50.79                     & \textbf{\underline{180}}    & 1.49                      & \textbf{\underline{180}}    & 3.30                      & \textbf{\underline{180}}    & 1.52                      & \underline{180}             & TO                        & \textbf{\underline{180}}    & 24.54                     & 0                                          \\ 
\hline
$C^8_{21}$                                                                                    & $D^1_{21}$                                                                                  & 432                     & TO                        & 436                     & TO                        & \textbf{\underline{427}}    & 373.9                     & \textbf{\underline{427}}    & 18.54                     & 435                     & TO                        & \textbf{\underline{427}}    & 1,185                     & 5                                          \\ 
\hline
$C^8_{21}$                                                                                    & $D^2_{21}$                                                                                  & 256                     & TO                        & 257                     & TO                        & \textbf{\underline{253}}    & 2,222                     & \textbf{\underline{253}}    & 80.84                     & 267                     & TO                        & \textbf{\underline{253}}    & 1,116                     & 3                                          \\ 
\hline
$C^1_{25}$                                                                                    & $D^3_{25}$                                                                                  & \textbf{\underline{73}}     & 1.89                      & \textbf{\underline{73}}     & 0.08                      & \textbf{\underline{73}}     & 0.00                      & \textbf{\underline{73}}     & 0.04                      & \underline{73}              & TO                        & \textbf{\underline{73}}     & 1.1                       & 0                                          \\ 
\hline
$C^1_{25}$                                                                                    & $D^4_{25}$                                                                                  & \textbf{\underline{200}}    & 44.13                     & \textbf{\underline{200}}    & 1.02                      & \textbf{\underline{200}}    & 0.17                      & \textbf{\underline{200}}    & 0.00                      & \underline{200}             & TO                        & \textbf{\underline{200}}    & 2.18                      & 0                                          \\ 
\hline
$C^1_{55}$                                                                                    & $D^5_{55}$                                                                                  & \textbf{\underline{309}}    & 58.49                     & \textbf{\underline{309}}    & 2.10                      & \textbf{\underline{309}}    & 4.57                      & \textbf{\underline{309}}    & 3.16                      & \textbf{\underline{309}}    & 11,079                    & \textbf{\underline{309}}    & 460.1                     & 0                                          \\ 
\hline
$C^1_{55}$                                                                                    & $D^6_{55}$                                                                                  & \textbf{\underline{71}}     & 2.26                      & \textbf{\underline{71}}     & 0.17                      & \textbf{\underline{71}}     & 2.91                      & \textbf{\underline{71}}     & 2.64                      & \textbf{\underline{71}}     & 6.33                      & \textbf{\underline{71}}     & 28.56                     & 0                                          \\ 
\hline
\multicolumn{2}{|c|}{\#OPTIMAL}                                                                                                                                                             & \multicolumn{2}{c|}{16}                             & \multicolumn{2}{c|}{16}                             & \multicolumn{2}{c|}{20}                             & \multicolumn{2}{c|}{20}                             & \multicolumn{2}{c|}{12}                             & \multicolumn{2}{c|}{20}                             & \multicolumn{1}{c|}{-}                     \\ 
\hline
\multicolumn{2}{|c|}{\#BEST}                                                                                                                                                                & \multicolumn{2}{c|}{18}                             & \multicolumn{2}{c|}{18}                             & \multicolumn{2}{c|}{20}                             & \multicolumn{2}{c|}{20}                             & \multicolumn{2}{c|}{15}                             & \multicolumn{2}{c|}{20}                             & \multicolumn{1}{c|}{-}                     \\ 
\hline
\multicolumn{2}{|c|}{\#AVG. Rank}                                                                                                                                                           & \multicolumn{2}{c|}{3.48}                           & \multicolumn{2}{c|}{3.63}                           & \multicolumn{2}{c|}{3.28}                           & \multicolumn{2}{c|}{3.28}                           & \multicolumn{2}{c|}{4.08}                           & \multicolumn{2}{c|}{3.28}                           & \multicolumn{1}{c|}{-}                     \\
\hline
\end{tabular}
\end{adjustbox}
\end{table}
\end{landscape}

The most significant interpretation of these results lies in the disproportionate computational expenditure required by competing methods to achieve comparable solution quality. To reach the bounds on challenging instances, LPR and VNS required an exorbitant allocation of time. For example, on the GEOM120a instance, LPR consumed over 24,800 seconds, and the VNS metaheuristic demanded a staggering 59,717 seconds. Similarly, VNS required more than 21,600 seconds just to process GEOM50a. By stark contrast, all OBE evaluations were strictly truncated at 3,600 seconds. Despite this severe computational restriction, OBE consistently matched the best-known bounds established by these specialized approximate methods on massive instances like GEOM70, GEOM80b, and GEOM90, resulting in a gap of zero. The ability of OBE to yield identical or superior span values in a mere fraction of the time redefines the practical utility of SAT encodings. It proves that the OBE model does not merely compete with heuristics on solution quality, but fundamentally outclasses them in computational efficiency, offering a balance between rigorous exactness and execution speed.

\section{Conclusion}
\label{sec:conclusion}
This paper presented the first SAT-based exact framework for the Bandwidth Multicoloring Problem (BMCP). The proposed approach is centered on an efficient SAT encoding that compactly models both intra-vertex and inter-vertex color distance constraints. Combined with tight color domain reduction and incremental SAT solving, the resulting formulation effectively reduces the search space while preserving exactness.
Experimental results on the standard GEOM and MS-CAP benchmark suites demonstrate that the proposed framework significantly advances the state of the art in exact BMCP solving. Within a one-hour time limit, it proves optimality for several benchmark instances that remained unsolved by previous exact approaches and verifies the optimality of previously reported best-known solutions. In particular, on the challenging GEOM benchmark, our comprehensive evaluation increases the total number of instances with proven optimality from 9 to 16, compared with the previous state-of-the-art CP/IP-based exact approach. Specifically, the proposed OBE framework independently solves 14 instances to optimality, while our modern Gurobi re-implementation of the IP model uniquely resolves 2 additional instances. Remarkably, these advancements were achieved requiring a computational time limit of only one hour rather than 48 hours. These results demonstrate a substantial improvement in the practical scalability and effectiveness of exact BMCP solving.
Although state-of-the-art metaheuristics remain superior in terms of solution quality on the most difficult instances, the proposed framework consistently strengthens the performance of existing exact approaches while preserving the ability to certify global optimality. These results demonstrate that SAT-based reasoning provides an effective and scalable foundation for exact BMCP optimization and significantly extends the frontier of provably optimal solutions.
Several directions for future research remain promising. First, the proposed SAT encoding could be further strengthened through additional symmetry-breaking techniques and tighter constraint formulations. Second, integrating SAT with optimization paradigms such as MaxSAT or hybrid SAT-based approaches may further improve scalability on large and difficult instances. Finally, extending the proposed framework to other graph coloring variants involving distance or multicoloring constraints represents another interesting direction for future investigation.

\section*{Funding}
This work has been supported by VNU University of Engineering and Technology under project number CN26.01.
\section*{Data Availability Statement}
The source code for all encoding implementations, benchmark instances, and experimental datasets are publicly available in a GitHub repository: \url{https://github.com/homulily85/bmcp} \cite{duc2026obe}

\appendix
\section{Correctness of Lower Bound for the BMCP}
\label{apx:lower-bound-correctness}

In this appendix, we establish the formal correctness of the lower bound for the BMCP.

\begin{theorem}
\label{thm:clique-bound}
Let $C \subseteq V$ be a clique in $G$. Let $W_C = \sum_{v \in C} w(v)$ be the total color requirement of the clique, and let $d_{min}^C = \min_{u, v \in C} d(\{u,v\})$ be the tightest distance constraint among all vertices in $C$. The minimum span $k$ of a valid coloring must satisfy:
\begin{equation}
    k \geq 1 + (W_C - 1) \cdot d_{min}^C
\end{equation}
\end{theorem}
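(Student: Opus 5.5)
The plan is to collect all colors used by the clique into one sorted list and show that any two consecutive entries are at least $d_{min}^C$ apart. Fix a valid multicoloring $c$ of span $k$, and consider the multiset $M = \{(v,m) : v \in C,\ m \in c(v)\}$. It has exactly $W_C$ elements, since each $v$ receives exactly $w(v)$ distinct colors. We need the minimum in the definition of $d_{min}^C$ to range over pairs $u,v \in C$ with $u = v$ allowed, so that it includes the intra-vertex distances $d(\{u,u\})$. This is implicit in the statement, and it is the reading needed for the bound to hold.

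The first step is to show that any two distinct elements $(u,m) \neq (v,n)$ of $M$ satisfy $|m-n| \geq d_{min}^C$. If $u=v$, then $m \neq n$ are distinct colors of the same vertex, so $|m-n| \geq d(\{u,u\}) \geq d_{min}^C$. If $u \neq v$, then $\{u,v\} \in E$ because $C$ is a clique, so $|m-n| \geq d(\{u,v\}) \geq d_{min}^C$. Since $d_{min}^C \geq 1$ (all weights are positive integers), all $W_C$ color values are pairwise distinct. Hence they can be sorted strictly as $m_1 < m_2 < \dots < m_{W_C}$.

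The second step telescopes: $m_{W_C} - m_1 = \sum_{t=1}^{W_C-1}(m_{t+1}-m_t) \geq (W_C-1)\,d_{min}^C$. Because colors are positive integers, $m_1 \geq 1$, so $k \geq m_{W_C} \geq 1 + (W_C-1)\,d_{min}^C$.

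The only delicate point is the interpretation of $d_{min}^C$. It must include the loop values $d(\{u,u\})$, or at least this must hold whenever some vertex of $C$ has $w(u) \geq 2$; otherwise two colors of the same vertex could be closer than $d_{min}^C$ and the bound could fail. I would state this convention explicitly at the start of the proof. I would also handle the degenerate case $|C|=1$ uniformly: there $d_{min}^C = d(\{u,u\})$, which matches the per-vertex bound $UB_{u,w(u)} \geq LB_{u,w(u)}$ from Section \ref{subsec:bounds}.
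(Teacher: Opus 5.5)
Your proof is correct and follows essentially the same argument as the paper. Both proofs show that any two of the $W_C$ colors used on the clique are at least $d_{min}^C$ apart, using the intra-vertex constraint for colors of the same vertex and the inter-vertex constraint (via the clique property) for colors of different vertices. Both then deduce distinctness from $d_{min}^C \geq 1$, sort the colors, telescope, and finish with $s_1 \geq 1$. Your explicit remark that $d_{min}^C$ must range over the loop values $d(\{u,u\})$ is a worthwhile clarification. The paper's proof relies on $d(\{u,u\}) \geq d_{min}^C$ without stating this convention, and the bound can fail without it.
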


\begin{proof}
Let $S_C = \bigcup_{v \in C} c(v)$ represent the union of all colors assigned to the vertices within the clique $C$. 

First, we prove that $S_C$ contains exactly $W_C$ distinct elements. Suppose $m, n \in S_C$. If $x$ and $y$ are assigned to the same vertex $u \in C$, by \textit{Intra-vertex color distance constraint}, $|m - n| \geq d(\{u,u\}) \geq d_{min}^C$. If $x$ is assigned to $u \in C$ and $y$ is assigned to $v \in C$ ($u \neq v$), by Condition 3, $|x - y| \geq d(\{u,v\}) \geq d_{min}^C$. Because $d(e) \in \mathbb{Z}^+$, we know $d_{min}^C \geq 1$. Therefore, no two colors in $S_C$ can be equal, implying $|S_C| = \sum_{v \in C} |c(v)| = W_C$.

Let us arrange the $W_C$ distinct colors of $S_C$ in strictly increasing order: $S_C = \{s_1, s_2, \dots, s_{W_C}\}$ such that $s_1 < s_2 < \dots < s_{W_C}$. 

As established above, the absolute difference between \textit{any} two colors in $S_C$, regardless of whether they belong to the same vertex or adjacent vertices, is strictly bounded below by $d_{min}^C$. Therefore, for consecutive colors:
\begin{equation}
    s_{i+1} - s_i \geq d_{min}^C \quad \forall i \in \{1, \dots, W_C-1\}
\end{equation}

By recursively summing these inequalities from $s_1$ to $s_{W_C}$, we find:
\begin{equation}
    s_{W_C} \geq s_1 + (W_C - 1) \cdot d_{min}^C
\end{equation}

Because colors must be positive integers, $s_1 \geq 1$. Consequently:
\begin{equation}
    s_{W_C} \geq 1 + (W_C - 1) \cdot d_{min}^C
\end{equation}

Since the global span $k$ is the maximum over all assigned colors in the graph, it must be greater than or equal to the maximum color in the clique $S_C$. Thus, $k \geq s_{W_C}$, which completes the proof.
\end{proof}

\section{Correctness of Order-Based Encoding} \label{apx:order-based-correctness}
This appendix establishes the correctness of the proposed order-based encodings.

\begin{theorem}
Given a BMCP instance $G=(V, E)$ and a target span $k$, there exists a valid coloring $c: V \rightarrow \mathcal{P}(\mathbb{Z}^+)$ with span $k$ if and only if the proposed order-based formulation is satisfiable.
\end{theorem}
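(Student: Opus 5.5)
The proof has two directions. I read ``coloring with span $k$'' as ``a valid coloring all of whose colors lie in $\{1,\dots,k\}$''. This is the reading the domains $[LB_{u,i},UB_{u,i}]$ and Algorithm~\ref{alg:optimal-solving} rely on, and a span-$\le k$ coloring can always be read as a witness for target $k$. Throughout, I will fix the convention for non-existent variables explicitly: $g_{u,i,m}$ is the constant \emph{true} for $m\le LB_{u,i}$ and the constant \emph{false} for $m>UB_{u,i}$, and $x_{u,i,m}$ is \emph{false} outside $[LB_{u,i},UB_{u,i}]$. These are exactly the values the intended semantics gives them.

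\emph{Coloring $\Rightarrow$ satisfiable.} Given a valid coloring $c$ with all colors at most $k$, sort each $c(u)$ as $c(u)_1<\dots<c(u)_{w(u)}$. Set $x_{u,i,m}$ true iff $c(u)_i=m$, and $g_{u,i,m}$ true iff $c(u)_i\ge m$.
\begin{enumerate}
\item The intra-vertex constraint gives $c(u)_{i+1}-c(u)_i\ge d(\{u,u\})$. Together with $c(u)_1\ge 1$ and $c(u)_{w(u)}\le k$, chaining these inequalities yields $LB_{u,i}\le c(u)_i\le UB_{u,i}$, so every value lies in its domain.
\item Linkage, monotonicity and the unit clauses $g_{u,i,LB_{u,i}}$ then hold by the definition of $g$ as a threshold indicator.
\item For the intra clause $g_{u,i,m}\to g_{u,i+1,m+d(\{u,u\})}$, note that $UB_{u,i+1}=UB_{u,i}+d(\{u,u\})$, so the target variable exists whenever $m\le UB_{u,i}$. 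Its truth follows from $c(u)_{i+1}\ge c(u)_i+d(\{u,u\})\ge m+d(\{u,u\})$.
\item For the inter clause, suppose $x_{u,i,m}$ is true and write $d=d(\{u,v\})$. Validity gives $|m-c(v)_j|\ge d$, so either $c(v)_j\ge m+d$ or $c(v)_j\le m-d$. The first case makes $g_{v,j,m+d}$ true; the second makes $g_{v,j,m-d+1}$ false. The boundary convention gives the correct constants when these indices leave the domain: an index above $UB_{v,j}$ cannot satisfy $c(v)_j\ge$ it, and an index at or below $LB_{v,j}$ cannot have $c(v)_j$ below it.
\end{enumerate}

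\emph{Satisfiable $\Rightarrow$ coloring.} Take a satisfying assignment. For fixed $(u,i)$, monotonicity plus $g_{u,i,LB_{u,i}}$ show that the true $g$'s form a nonempty prefix $[LB_{u,i},t]$ with $t\le UB_{u,i}$. The linkage equivalences then force $x_{u,i,m}$ true exactly for $m=t$. Define $c(u)_i:=t$, i.e.\ the unique $m$ with $x_{u,i,m}$ true, which is also characterised by $g_{u,i,m'}\iff m'\le c(u)_i$.
\begin{enumerate}
\item Instantiating the intra clause at $m=c(u)_i$ gives $c(u)_{i+1}\ge c(u)_i+d(\{u,u\})$. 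Hence the $w(u)$ values are strictly increasing, so $c(u)$ has exactly $w(u)$ distinct colors, and all pairs in $c(u)$ (not only consecutive ones) are at distance at least $d(\{u,u\})$ by summing.
\item Instantiating the inter clause at $m=c(u)_i$ gives $c(v)_j\ge m+d$ or $c(v)_j\le m-d$, i.e.\ $|c(u)_i-c(v)_j|\ge d(\{u,v\})$.
\item All colors lie in $[1,k]$ because $LB_{u,i}\ge 1$ and $UB_{u,i}\le k$.
\end{enumerate}

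The main obstacle is bookkeeping, not ideas. The equivalence hinges on the ``non-existent variables are constants'' convention being applied consistently in both directions, especially in the inter-vertex clause, where $m\pm d$ routinely falls outside $[LB_{v,j},UB_{v,j}]$. I would therefore isolate a small lemma first: under that convention, in any satisfying assignment (resp.\ the canonical assignment), $g_{v,j,m'}$ evaluates to $[c(v)_j\ge m']$ for \emph{all} integers $m'$. With this lemma, both the intra and inter clauses reduce directly to the arithmetic conditions on $c$, and the rest is routine.
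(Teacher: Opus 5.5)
Your proposal is correct and follows essentially the same route as the paper's proof. For completeness you use the canonical threshold assignment ($x_{u,i,m}$ true iff $c(u)_i=m$, $g_{u,i,m}$ true iff $c(u)_i\ge m$), and for soundness you decode $c(u)_i$ as the unique $m$ with $x_{u,i,m}$ true; you are also more careful than the paper on points it glosses over, namely the chaining argument for $LB_{u,i}\le c(u)_i\le UB_{u,i}$, the explicit boundary constants in the inter-vertex clause, and the separation of non-consecutive colors of the same vertex (and if one insists on span exactly $k$, shifting all colors upward uniformly turns your span-$\le k$ witness into one of span exactly $k$).
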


\begin{proof}
\textit{(Soundness)} Assume a satisfying assignment exists. Construct the coloring $c(u)_i = m \iff x_{u,i,m} = 1$. 
\begin{itemize}
    \item \textit{Color cardinality constraint:} Linkage and monotonicity constraints force the sequence $g_{u,i,m}$ to transition from 1 to 0 exactly once. Thus, exactly one $x_{u,i,m} = 1$ per index $i$, assigning each vertex $u$ exactly $w(u)$ colors.
    \item \textit{Intra-vertex color distance constraint:} For consecutive colors $m=c(u)_i$ and $n=c(u)_{i+1}$, $x_{u,i,m}=1 \implies g_{u,i,m}=1$. The constraint $g_{u,i,m} \rightarrow g_{u,i+1,m+d(\{u,u\})}$ forces $g_{u,i+1,m+d(\{u,u\})} = 1 \implies n \ge m + d(\{u,u\})$.
    \item \textit{Inter-vertex color distance constraint:} For adjacent $u,v$, let $m=c(u)_i$ and $n=c(v)_j$. Since $x_{u,i,m}=1$, the constraint $x_{u,i,m} \rightarrow (g_{v,j,m+d(\{u,v\})} \vee \neg g_{v,j,m-d(\{u,v\})+1})$ forces either $n \ge m + d(\{u,v\})$ or $n \le m - d(\{u,v\})$. Both cases ensure $|m-n| \ge d(\{u,v\})$.
    \item \textit{Span:} The target span is respected as variables are bounded by $UB_{u,i} \le k$.
\end{itemize}
Thus, $c$ is a valid BMCP coloring.

\textit{(Completeness)} Assume a valid coloring $c$ with span $k$. Assign truth values as $x_{u,i,m} = 1 \iff c(u)_i = m$ and $g_{u,i,m} = 1 \iff c(u)_i \geq m$.
\begin{itemize}
    \item \textit{Domain \& Basic Constraints:} A valid coloring inherently satisfies the theoretical bounds $LB_{u,i} \le c(u)_i \le UB_{u,i}$. This bounds compliance trivially satisfies the lower bound, linkage, and monotonicity clauses.
    \item \textit{Distance Constraints:} Since $c$ is valid, it inherently satisfies $c(u)_{i+1} \ge c(u)_i + d(\{u,u\})$ and $|c(u)_i - c(v)_j| \ge d(\{u,v\})$. These inequalities perfectly mirror and satisfy the intra-vertex ($g_{u,i,a} \to g_{u,i+1,a+d(\{u,u\})}$) and inter-vertex ($x_{u,i,a} \to g_{v,j,a+d(\{u,v\})} \lor \neg g_{v,j,a-d(\{u,v\})+1}$) SAT clauses.
\end{itemize}
Since all constraints naturally hold under this mapping, the order-based formula is satisfiable.
\end{proof}

\bibliographystyle{plain}
\bibliography{sn-bibliography}

\end{document}